\theoremstyle{definition} 
\newtheorem{definition}{Definition}
\theoremstyle{plain} 
\newtheorem{theorem}{Theorem}[section] 
\def\BibTeX{{\rm B\kern-.05em{\sc i\kern-.025em b}\kern-.08em
    T\kern-.1667em\lower.7ex\hbox{E}\kern-.125emX}}
\begin{document}

\title{Distributed Processing of $k$NN Queries over Moving Objects on Dynamic Road Networks}

\author{  
\IEEEauthorblockN{  
Mingjin Tao\IEEEauthorrefmark{1},  
Kailin Jiao\IEEEauthorrefmark{1},  
Yawen Li\IEEEauthorrefmark{2},  
Wei Liu\IEEEauthorrefmark{1},
Ziqiang Yu\IEEEauthorrefmark{1*}\thanks{*Corresponding author}}  

\IEEEauthorblockA{  
\IEEEauthorrefmark{1}Yantai University \IEEEauthorrefmark{2}Beijing University of Posts and Telecommunications}  
}
\DeclareRobustCommand*{\IEEEauthorrefmark}[1]{%
    \raisebox{0pt}[0pt][0pt]{\textsuperscript{\footnotesize\ensuremath{#1}}}}

\maketitle

\begin{abstract}
The $k$ Nearest Neighbor ($k$NN) query over moving objects on road networks is essential for location-based services. Recently, this problem has been studied under road networks with distance as the metric, overlooking fluctuating travel costs. We pioneer the study of the $k$NN problem within dynamic road networks that account for evolving travel costs. Recognizing the limitations of index-based methods, which become quickly outdated as travel costs change, our work abandons indexes in favor of incremental network expansion on each snapshot of a dynamic road network to search for $k$NNs. To enhance expansion efficiency, we present D$k$NN, a distributed algorithm that divides the road network into sub-networks for parallel exploration using Dijkstra's algorithm across relevant regions. This approach effectively addresses challenges related to maintaining global distance accuracy during local, independent subgraph exploration, while minimizing unnecessary searches in irrelevant sub-networks and facilitating the early detection of true $k$NNs, despite the lack of constant global search monitoring. Implemented on the Storm platform, D$k$NN demonstrates superior efficiency and effectiveness over traditional methods in real-world road network scenarios.
\end{abstract}

\begin{IEEEkeywords}
dynamic road network, $k$NN query, moving objects
\end{IEEEkeywords}

\section{Introduction}
Given a road network with a set of moving objects, a $k$-nearest neighbor ($k$NN) query seeks to identify the $k$ closest objects to a specified query point. This problem is fundamental to many location-based services~\cite{ref35,ref36,ref37,ref38}, such as ride-hailing, location-based gaming (e.g., Ingress), and emergency medical dispatch. In real-world scenarios, users are often more concerned with travel costs than road distances. Considering the travel cost of each road constantly varies as time evolves, we model the road network as a dynamic graph, where intersections and endpoints are represented as vertices, roads as edges, and the time-varying travel cost on each road as an evolving weight. Our focus is on processing $k$NN queries within this dynamic graph scenario, which more accurately reflects real-world conditions but has received limited attention in existing research.

The $k$NN problem over moving objects on the graph has been extensively studied~\cite{ref2,ref1,ref10,ref11,ref9,ref14,ref12,ref15,ref8,ref7,ref39,ref33,ref3,ref4,ref5,ref13,ref16,ref17,ref42,ref43}. Some of them adopt an Incremental Network Expansion (INE) strategy~\cite{ref2} that incrementally explores outward from the query point on the graph, gradually identifying the closest objects based on Dijkstra’s relaxation principle~\cite{ref1}. These methods typically terminate early once the $k$NNs are found. However, vertex-by-vertex expansion becomes inefficient when dealing with large query scopes. To address this issue, various index structures such as ROAD~\cite{ref10}, G-tree~\cite{ref11}, V-tree~\cite{ref9}, SIM$k$NN~\cite{ref14}, G*-tree~\cite{ref12}, TEN*-Index~\cite{ref7}, and ODIN~\cite{ref39} have been proposed to improve query efficiency. Nonetheless, these indexes are designed for static graphs with fixed edge weights and require frequent updates when applied to dynamic graphs, leading to substantial maintenance costs.

Considering the high overhead of maintaining a global index for dynamic graphs, we avoid employing a comprehensive index. Instead, we focus on INE-based exploration and propose a distributed $k$NN (D$k$NN) algorithm that employs a divide-and-conquer strategy to accelerate the search process. The algorithm partitions the graph into multiple small subgraphs. Starting from the subgraph containing the query point, D$k$NN performs a Dijkstra-based expansion. When the expansion reaches an adjacent subgraph, it initializes a separate search instance to continue the Dijkstra expansion within that subgraph, starting from the entry point. As exploration proceeds, additional search instances are created within each encountered subgraph until the $k$NNs are identified. Since these explorations across multiple subgraphs can run in parallel, this approach significantly speeds up the overall search.

Unlike centralized $k$NN algorithms, which can readily determine the true $k$NNs and identify when they have been fully discovered thanks to access to global search results, the D$k$NN approach cannot constantly fetch the search information from all search instances at any given moment. This leads to several challenges when searching for $k$NNs in road networks in a distributed manner: (1) A globally shortest path may traverse multiple subgraphs. Independent exploration within each subgraph cannot guarantee the preservation of shortest distances between vertices as they are on the entire graph, which is essential for accurately determining the $k$NN. (2) Each subgraph's search instance is unaware of objects discovered in other subgraphs, often leading to unnecessary expansion into adjacent subgraphs, even when further search is unlikely to yield closer objects. (3) Identifying the true $k$NNs relies on objects returned from multiple subgraphs, but without a global view, it is difficult to determine precisely when the $k$NNs have been found and that no closer objects will appear in the future.

Recently, some distributed approaches are proposed~\cite{ref21,ref22,ref23,ref24,ref25,ref27,ref41,ref26,ref29,ref30,ref31,ref32,ref28} to tackle the shortest path search on graphs. The presence of an explicit target point in these approaches allows for easy identification of when the exact result is determined, enabling timely termination of the distributed search process. 
However, these methods do not effectively address the aforementioned challenges in the context of $k$NN searches, where no explicit target exists, and the termination criteria are not as straightforward. 

To address these challenges, D$k$NN first establishes a candidate search scope based on the initially identified objects, ensuring this scope encompasses the true $k$NNs. Consequently, only the subgraphs intersecting with this candidate scope need to be explored, avoiding the excessive expansion on unpromising subgraphs. Within each subgraph, an entry-point-driven exploration mechanism is employed. Specifically, when an entry-point of a subgraph is reached, the corresponding search instance uses this entry point as a source to perform Dijkstra-based expansion within the subgraph, calculating the shortest distances from internal vertices to the query point. While a single exploration from one entry-point might not yield globally accurate shortest distances, multiple explorations from various entry-points can collectively refine and correct these inaccuracies. To prevent unnecessary exploration, a search instance initiates exploration only when an unvisited entry-point is reached, or if a previously visited entry-point is visited again with a new, shorter distance to the query point. Finally, D$k$NN implements a message-tracing mechanism that guarantees the correct identification of the true $k$NNs once all message exchanges between search instances have been completed. The main contributions of this work are as follows:

\begin{itemize}
    \item  We study the $k$NN problem over moving objects on dynamic road networks, explicitly accounting for the variable travel costs of road segments that occur in real-world scenarios.

    \item We introduce D$k$NN, a distributed $k$NN search algorithm that employs a divide-and-conquer strategy to parallelize exploration across multiple relevant subgraphs. This approach accelerates the search process, enables early identification of the exact $k$NNs, and helps avoid unnecessary exploration of unpromising subgraphs.
     
    \item D$k$NN is implemented on Apache Storm~\cite{ref40}, a distributed stream data processing platform. Extensive experiments conducted on multiple real-world road networks demonstrate the effectiveness and advantages of our approach compared to baseline methods.
\end{itemize}

This paper is organized as follows. Section~\ref{Section2} discusses related work, Section~\ref{Section3} presents some important preliminaries. Section~\ref{Section4} describes the D$k$NN algorithm. Section~\ref{Section5} presents the results of our experimental evaluation, and Section~\ref{Section6} concludes the paper.
\section{Related Work}\label{Section2}

In this section, we review the centralized approaches for solving $k$NN queries on road networks and distributed shortest path search algorithms on road networks.

\textbf{Centralized $k$NN search algorithms.} Methods like INE~\cite{ref2}, IER~\cite{ref3}, and IMA~\cite{ref4}  expand outward from the query vertex to explore nearby objects on the graph, moving from close to distant. However, when traversing regions with sparse objects, these methods often access numerous irrelevant vertices without associated objects for their lack of a macro-level understanding of graph structures, leading to unnecessary computational overhead. To mitigate this issue, various index structures~\cite{ref7,ref8,ref9,ref11,ref12,ref15,ref14,ref39,ref33} have been proposed to enhance query efficiency. For example, TOAIN~\cite{ref8} and TEN*-Index~\cite{ref7} utilize pre-built label trees for faster query processing. G-tree~\cite{ref11}, V-tree~\cite{ref9}, SIM$k$NN~\cite{ref14}, GLAD~\cite{ref15}, and ODIN~\cite{ref39} divide the graph into small partitions, and then avoid the exploration within the partitions that impossibly contain $k$NNs. KNN-Index~\cite{ref33} indexes $k$NNs for each vertex to reduce the search overhead. All these methods depend heavily on indexes, which need frequent updates in dynamic graphs, resulting in substantial maintenance costs.

\textbf{Distributed shortest path search algorithms.} Recently, some distributed shortest path search algorithms on graphs have been proposed~\cite{ref27,ref41,ref26,ref29,ref30,ref31,ref32,ref28}, but they are not directly applicable to the $k$NN search problem. In particular, some methods, such as ParaPLL~\cite{ref27} (an extension of PLL~\cite{ref29}) and DH-Index~\cite{ref41}, use pre-built indexes and face the same challenges as centralized methods. Other approaches, like those utilizing PowerGraph~\cite{ref30}, Pregel~\cite{ref31}, and GPS~\cite{ref32}, employ index-free shortest path algorithms but often struggle to deliver high query efficiency when handling large-scale searches. Similar to our approach, CANDS~\cite{ref28} divides the graph into subgraphs for parallel search; however, CANDS is primarily designed for single-source shortest path problems. Its query processing involves linear expansion from the query point to the target, while the distributed processing of $k$NN queries requires concurrent exploration across multiple subgraphs, a capability that exceeds the scope of CANDS.


\section{Preliminaries}\label{Section3}

\begin{definition}[Undirected graph]
The road network is modeled as an undirected  weighted graph $G=(V,E,W)$ consists of (1) a finite set of vertices $V$, (2) a set of edges $E \subseteq V \times V$, where $e_{i,j} \in E$ connects $v_i$ and $v_j$, and (3) a set of non-negative weights $W$. 

To account for varying weights of the road network, each snapshot of the road network is represented by such a graph at a specific time point. The $K$NN queries arriving during the interval between consecutive snapshots are processed using the current snapshot.
\end{definition}

\begin{definition}[Path, Path Distance]
Path $P(v_s,v_d)$ from the source vertex $v_s$ to the target vertex $v_d$ in graph $G$ is a sequence of vertices $(v_0=v_s,v_1...,v_i...,v_n=v_d)$ such that each $(v_{i-1},v_i) \in E$. We restrict to simple paths (no repeated vertices). Its distance is $D(P(v_s,v_d))=\sum_{i=1}^{n} w_{i-1,i}$. The shortest path distance from vertex $v_s$ to $v_d$, which is represented as $SD(v_s,v_d)$.


Our distributed search algorithm breaks down finding $k$NNs across the entire graph into parallel searches of partial $k$NNs within partitioned subgraphs. The subgraph is defined as follows.
\end{definition}

\begin{definition}[Subgraph]
A graph $SG=(V’,E’,W’)$ is a subgraph of the graph $G$ if (1) $V'\subseteq V$, (2) $E^{'} \subseteq E\land (e_{i,j} \in E', {v_i, v_j}\in V')$, and (3) $W'\subseteq W\land (w_{i,j} \in W'\rightarrow e_{i,j}\in E')$.

We use the METIS algorithm~\cite{ref19} to conduct graph partition. Partitioned subgraphs are connected via external edges between Border vertices.

\textbf{Border vertex.} For a vertex $v_i$ in subgraph $SG_p$, if the vertex $v_i$ has at least one adjacent vertex $v_j\in SG_q (q\neq p)$ ,then $v_i$ is called a border vertex of subgraph $SG_p$.

\textbf{External edge.} An edge $e (e\in E)$ is an external edge if it connects two border vertices in two different subgraphs.

\begin{figure}[t!]
    \vspace{-0.4cm}
    \centering
    \includegraphics[width=0.8\linewidth]{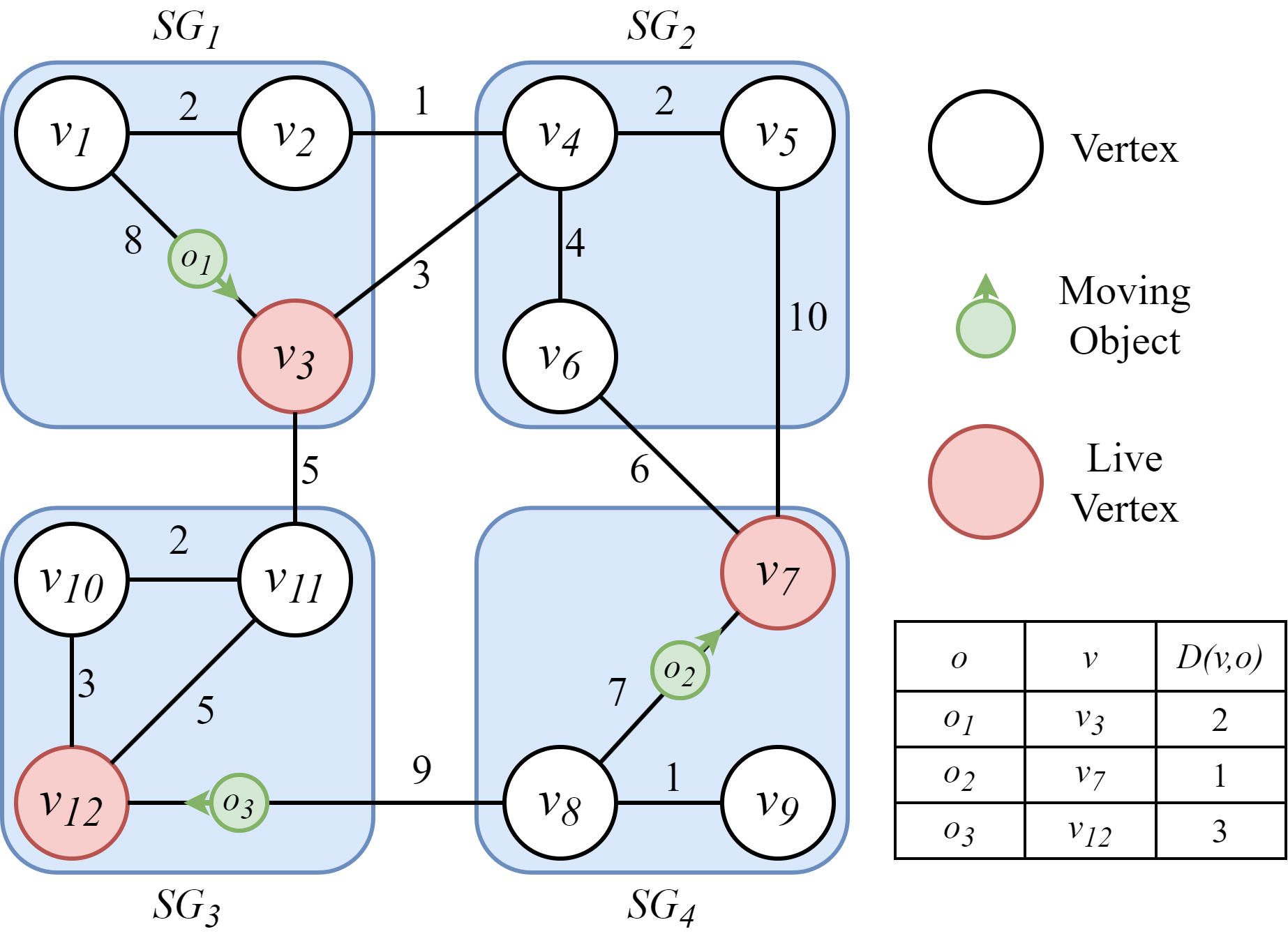}
    \caption{Example of subgraph partitioning with border vertices}
    \label{FIG1}
    \vspace{-0.5cm}
\end{figure}

Fig.~\ref{FIG1} illustrates an example of subgraph partitioning. The live vertices are highlighted in red, indicating the objects moving toward them.

\textbf{Live vertex.}
In our model, each moving object $o$ is associated with the next vertex $v_i$ it is traveling towards. Such a vertex $v_i$ is called a live vertex. For example, in Fig.~\ref{FIG1}, $o_1$ is moving towards $v_3$, and $o_2$ towards $v_7$; both $v_3$ and $v_7$ are considered live vertices.
The shortest distance from a vertex $v_d$ to a moving object $o$ associated with $v_i$ $(v_i \neq v_d)$ can be represented as $S D(v_d, o)=SD(v_d,v_i)+D(v_i,o)$ , where $D(v_i,o)$ represents the remaining travel cost for $o$ to reach $v_i$.
\end{definition}

\begin{definition}[$k$NN Query]
Given a graph $G=(V,E,W)$, a set of moving objects $O(|O|>k)$, and a $k$NN query $q(v_q,k)$, $k$NNs of $q$ refer to a set of objects $R(q)$, satisfying:
1) $|R(q)|=k$; 2) $R(q)\subseteq O$; and 3) $\forall o\in R(q),\forall o'\in (O-R(q)),SD(v_q,o) \leq SD(v_q,o')$.
\end{definition}
\section{D$k$NN Algorithm}\label{Section4}

\subsection{Overview of D$k$NN}

The D$k$NN algorithm adopts a divide-and-conquer principle to search $k$NNs on the graph in a distributed manner. It partitions the graph into multiple subgraphs, aiming to break down the global $k$NN search into smaller, localized searches within relevant subgraphs. For each query, exploration within each pertinent subgraph is conducted independently by separate search instances. These instances can operate concurrently, significantly reducing the overall computational cost and accelerating the query processing. 


When a $k$NN query arrives, D$k$NN locates the subgraph containing the query point and creates a search instance that performs an intra-subgraph exploration, discussed in Section~\ref{subsubsec:intra-subgraph-exploration}. This exploration starts from the query point and expands the search scope based on the Dijkstra's algorithm to discover the objects within this subgraph. When reaching a border vertex, this indicates the adjacent subgraph with connections to this border vertex also need to be investigated. Hence, a search instance will be initialized to explore the adjacent subgraph. After exploring all relevant subgraphs that can be determined using the approach presented in Section~\ref{query-range-pruning}, the true $k$NNs can be determined using the distributed query termination mechanism discussed in Section~\ref{subsubsec:termination}.  

\subsection{Key Components of D$k$NN}

\subsubsection{Intra-Subgraph Exploration}\label{subsubsec:intra-subgraph-exploration}
\indent

The intra-subgraph exploration is performed within each relevant subgraph by its designated search instance. When a search instance receives a query request from an adjacent subgraph via a border vertex, it designates that border vertex as the source and begins the intra-subgraph exploration. The query request message contains these attributes: query $id$ ($qid$), starting border vertex ($v_b$), distance from query vertex to $v_b$ ($dist$), originating subgraph ($from$). For each query, every subgraph maintains an upper bound $\epsilon$, representing the current best $k$NN distance of the subgraph.
These attributes guide the search instance in making search decisions.

The Intra-Subgraph Exploration (ISE) procedure executed by each search instance is detailed in Algorithm~1. ISE employs the Dijkstra's algorithm to explore the subgraph from the starting border vertex $v_b$. In this process, the locally optimal distances from $v_q$ to each encountered object via $v_b$ can be determined. Moreover, on the current snapshot of the graph, ISE maintains a reusable local Dijkstra cache of shortest path distances originating from the entry border vertex $v_b$ to all other vertices within the subgraph (lines 5-6). This cache, once computed, can serve subsequent query messages arriving at the same entry point. The distance from the query vertex $v_q$ to an internal vertex is then derived by combining the incoming distance $D(v_q, v_b)$ with the corresponding distance retrieved from this local Dijkstra cache. This derived distance is then stored in a query-specific query distance cache that tracks the best-known distances from $v_q$ to the live and border vertices within the subgraph. When dealing with the subsequent query requests, if ISE discovers shorter distances from $v_b$ to the live and border vertices via other starting border vertices, it updates the query distance cache accordingly. ISE then adds the moving objects from these updated live vertices to the local candidate result queue $Q_l$ (lines 7-15). Once all queries related to the current snapshot have been processed, the search instance clears these caches before moving on to the next graph snapshot.

\begin{algorithm}[h]
  \caption{Intra-Subgraph Exploration (ISE)}
  \LinesNumbered
  \KwIn{Query $id$ $qid$, border vertex $v_b$, path distance from the query vertex to $v_b$ $dist$, subgraph $SG_p$, query upper bound $\epsilon$\;}
  \KwOut{Moving object set $Q_l$, query message set $M$\;}
  get $D(v_q,v_b)$ from the query distance cache in $SG_p$\;
  \If{$dist>\epsilon$ or $dist>D(v_q,v_b)$}{
    return $\emptyset$,$\emptyset$\;
  }
  $D(v_q,v_b)=dist$\;
  \If{$v_b$ is not visited}{
    Execute Dijkstra$(v_b,SG_p)$; Set $v_b$ as visited\;
  }
  \ForEach{live vertex $v_i$ of $SG_p$}{
    get $D(v_q,v_i)$ from the query distance cache in $SG_p$\;
    get $D(v_b,v_i)$ from Dijkstra$(v_b,SG_p)$\;
    \If{$dist+D(v_b,v_i)\leq D(v_q,v_i)$}{
      $D(v_q,v_i)=dist+D(v_b,v_i)$\;
      \ForEach{moving object $o$ of $v_i$}{
        \If{$D(v_q,v_i)+D(v_i,o)<\epsilon$}{
          $o.dist=D(v_q,v_i)+D(v_i,o)$\; Add moving object $o$ to set $Q_l$\;
        }
      }
    }
  }
  Initialize vertex set $V_n$ to store border vertices\;
  \ForEach{each border vertex $v_j$ of $SG_p$}{
    \ForEach{each adjacent vertex $v_b^{'}$ of $v_j$}{
      \If{$v_b^{'}\notin SG_p$}{
        \If{$D(v_q,v_b^{'})\geq dist+D(v_b,v_j)+D(v_j,v_b^{'})$}{
          $D(v_q,v_b^{'})=dist+D(v_b,v_j)+D(v_j,v_b^{'})$\; Add $v_b^{'}$ to set $V_n$\;
        }
      }
    }
  }
  \ForEach{each vertex $v_b^{'}$ in $V_n$}{
    Generate query message $msg=<qid, v_b^{'}, D(v_q, v_b^{'}), from=SG_p>$ and add it to set $M$\;
  }
  \Return $Q_l$, $M$\;
\end{algorithm}

Upon completing its local exploration, the subgraph sends corresponding messages to adjacent subgraphs and the query unit to facilitate search expansion and result integration. The exploration identifies and stores any border vertex whose shortest distance from the query point has been updated (lines 16-22); the algorithm then generates corresponding query request message $msg$ based on these border vertices and their connected subgraphs, and adds these messages to queue $M$ (lines 23-24). Upon search completion, the algorithm merges $msg$ targeting the same subgraph and sends them to the corresponding subgraph, reducing inter-subgraph communication overhead. Meanwhile, the local candidate result queue $Q_l$ is sent to the query unit for global result integration.



\subsubsection{Query Range Pruning}\label{query-range-pruning}
\indent

While intra-subgraph exploration efficiently finds local candidates, it generates requests to explore all adjacent subgraphs. 
In the D$k$NN algorithm, the exploration expands messages between adjacent subgraphs. This distributed propagation of search tasks can lead to asynchrony in the awareness of the global query upper bound, $\epsilon$. A subgraph might operate with an outdated, overly large $\epsilon$, causing query range overflow.


\begin{figure}[htbp]
    \vspace{-0.3cm}
    \centering
    \includegraphics[width=0.95\linewidth]{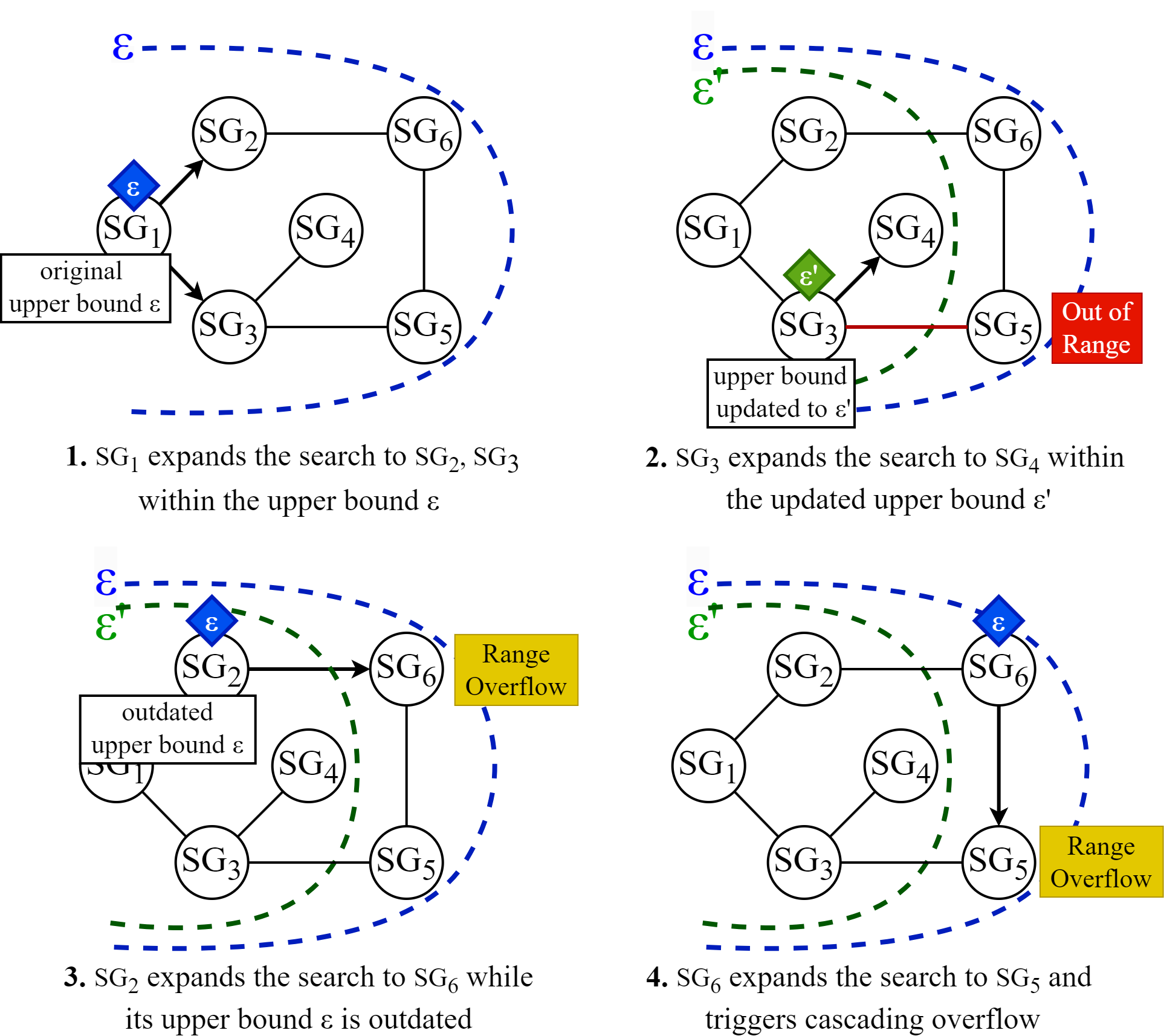}
    \caption{Example of query range overflow}
    \label{FIG2}
    \vspace{-0.7cm}
\end{figure}

For example, in the situation shown in Fig.~\ref{FIG2}, upon completion of its local search, subgraph $SG_1$ initiates parallel explorations into adjacent subgraphs $SG_2$ and $SG_3$, propagating its current upper bound $\epsilon$ to both. If the search in $SG_3$ finishes first and updates the upper bound to a tighter value $\epsilon'$, any subsequent expansion to a subgraph like $SG_5$, which satisfies the old bound $\epsilon$ but not $\epsilon'$, is correctly pruned. However, the search instance in $SG_2$ may still be operating with the outdated bound $\epsilon$. 
This lack of synchronization can lead it to unnecessarily expand into regions like $SG_6$ that lie outside the valid scope defined by $\epsilon'$.

To solve this problem, D$k$NN employs a hybrid pruning strategy, which accurately shrinks the query range through hybrid pruning using Euclidean distance and graph distance. The query manages moving objects with a priority queue in the query unit. When $k$ moving objects are collected, the distance of the $k$-th object is taken as the new upper bound $\epsilon$, ensuring that $\epsilon$ is always the maximum distance threshold of the current optimal solution. Utilizing the property that Euclidean distance $\leq$ graph distance, the query unit precomputes the minimal Euclidean distance to any border vertex of $E(v_q,SG_p)(SG_p\in SG)$ from the query vertex to each subgraph. If $E(v_q,SG_p)>\epsilon$, the subgraph is directly pruned to avoid invalid expansion.

Specifically, we can calculate the Euclidean distance between the query point and each border vertex in a subgraph based on their coordinates, and take the minimum value among all these Euclidean distances as the $E(v_q, SG_p)$ of that subgraph. Based on the Euclidean distance of each subgraph, the query unit maintains the current set of effective subgraphs $SG'$. When the result queue collects $k$ moving objects for the first time, the algorithm adds all subgraphs with $E(v_q, SG_p) \leq \epsilon$ to $SG'$; when $\epsilon$ is updated to a smaller value, the subgraphs in $SG'$ are filtered in descending order of their Euclidean distances. After pruning, the new $\epsilon$ is only distributed to the subgraphs in $SG'$, and the $\epsilon$ of the pruned subgraphs is set to 0 to terminate their query processing.

\begin{figure}[hbt!]
    \centering
    \includegraphics[width=0.9\linewidth]{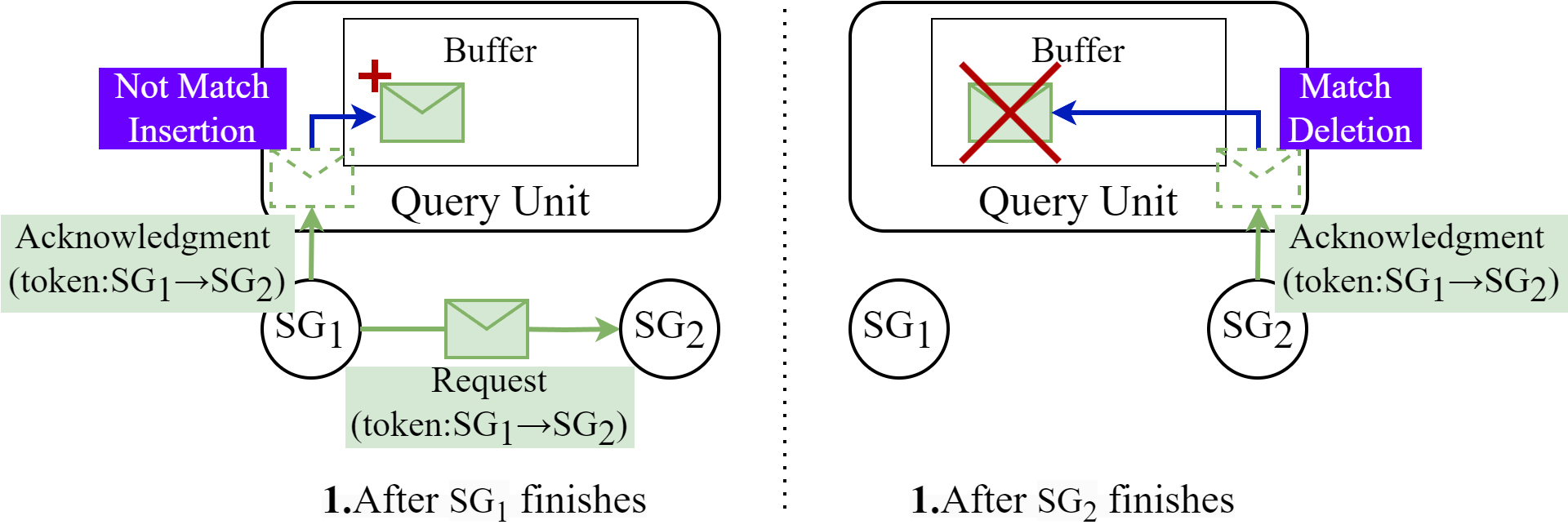}
    \caption{Example of two-way matching procedure}
    \label{FIG3}
    \vspace{-0.6cm}
\end{figure}

\subsubsection{Distributed Query Termination Mechanism}\label{subsubsec:termination}
\indent

In D$k$NN, a central query unit manages global query coordination, including result collection, subgraph pruning, and termination. To reliably track progress in a distributed setting, D$k$NN employs a centralized message acknowledgment mechanism based on a two-way token matching principle.

When a search instance in a subgraph $SG_p$, dispatches a query message to an adjacent subgraph $SG_q$, it synchronously notifies the query unit. The query unit then generates a unique acknowledgment token, $\tau = \langle q, SG_p \to SG_q \rangle$, and inserts it into a query-specific acknowledgment buffer $\mathcal{B}_q$. This token acts as a placeholder signifying an outstanding exploration path.
Upon completing its local search and any subsequent expansions, the receiving subgraph $SG_q$ is responsible for ``closing'' this path. It does so by sending a corresponding acknowledgment message token $\tau$, back to the query unit. The query unit then attempts to match this incoming acknowledgment with a token in the buffer. If a matching token $\tau' \in \mathcal{B}_q$ is found (where $\tau' = \tau$), it is removed from the buffer, indicating that the exploration path has been fully resolved.

For instance, as illustrated in Fig.~\ref{FIG3}, when the search in $SG_1$ expands to $SG_2$, a token $\langle q, SG_1 \to SG_2 \rangle$ is added to the buffer. After the search instance in $SG_2$ finishes its processing, it sends an acknowledgment for this specific token back to the query unit. The returning acknowledgment is then matched with the token in the buffer, and the token is removed.

A query terminates if and only if its acknowledgment buffer becomes empty. We formalize this guarantee with the following theorems.

\begin{theorem}
\label{thm:completeness}
The acknowledgment buffer $\mathcal{B}_q$ is empty if and only if all dispatched messages have been fully processed.
\end{theorem}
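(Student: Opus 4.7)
The plan is to reduce the biconditional to a bookkeeping invariant between tokens and in-flight messages, and then finish by structural induction on the tree of message expansions that query $q$ produces. First I would fix precise semantics: a dispatched message $m: SG_p \to SG_q$ is \emph{fully processed} when the receiving instance in $SG_q$ has completed its call to Algorithm~1 (ISE), every query message that call spawned has itself been fully processed, and $SG_q$ has returned the matching acknowledgment token $\tau = \langle q, SG_p \to SG_q \rangle$ to the query unit. The initial injection of the query at the subgraph containing $v_q$ is treated as the root of this expansion tree.

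The key lemma, following directly from the protocol described in Section~\ref{subsubsec:termination}, is the invariant: at every instant, the multiset of tokens residing in $\mathcal{B}_q$ equals the multiset of dispatched messages for which no matching acknowledgment has yet been received. Insertion is synchronous with dispatch, removal is synchronous with reception of a unique matching $\tau$, and the identifier $\langle q, SG_p \to SG_q \rangle$ never collides because each directed edge of the expansion tree is distinct, so the bijection is preserved step by step. With this invariant in hand, the ($\Leftarrow$) direction is a structural induction on the expansion tree: at leaves, where the recipient spawns no further messages, the acknowledgment is emitted immediately after ISE terminates; at an internal node the acknowledgment is deferred until every child has acknowledged, so by the induction hypothesis all descendant tokens have been removed and the node's own token is removed once its ack arrives. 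The ($\Rightarrow$) direction follows contrapositively: if some dispatched message is not fully processed, then by the recursive definition either its own ack or some descendant ack is still outstanding, and by the invariant the corresponding token persists in $\mathcal{B}_q$, contradicting emptiness.

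The principal obstacle I expect is establishing that the expansion tree is finite, without which the induction does not terminate and one cannot rule out infinite bouncing between neighbouring subgraphs. Finiteness must be extracted from the pruning rules of Algorithm~1: lines~2--4 drop any incoming message whose distance does not strictly improve the cached $D(v_q, v_b)$ or exceeds the current $\epsilon$, so each border vertex can trigger new outbound messages only finitely often per snapshot. This is reinforced by the query-range pruning of Section~\ref{query-range-pruning}, which monotonically shrinks $\epsilon$ and the effective subgraph set $SG'$. Combined, these guarantee that only finitely many tokens are ever created for one query on one snapshot, which closes the induction and establishes the biconditional.
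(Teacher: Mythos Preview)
Your argument is correct, but it is considerably heavier than the paper's. The paper treats ``fully processed'' as synonymous with ``acknowledged'' and then the biconditional is an immediate restatement of the token--message bijection: $\mathcal{B}_q=\emptyset$ iff the set of dispatched messages equals the set of acknowledged ones. You instead adopt a strictly stronger, recursive notion of ``fully processed'' (a message is done only when its entire subtree of descendants is done), which is why you need structural induction and, crucially, finiteness of the expansion tree for the $(\Rightarrow)$ direction---without finiteness one could have an infinite descending chain of acked-but-not-recursively-processed messages. That extra work buys you a genuinely stronger conclusion (transitive completion of all downstream exploration) and implicitly a termination argument, neither of which the paper's two-line proof supplies; the paper's version, on the other hand, needs none of this machinery because under its weaker reading the invariant already \emph{is} the theorem.

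Two small imprecisions worth tightening. First, the claim that the identifier $\langle q, SG_p\to SG_q\rangle$ ``never collides because each directed edge of the expansion tree is distinct'' is not right: the same ordered pair $(SG_p,SG_q)$ can label many tree edges (different border vertices, or repeated visits with improved distances). Your multiset framing already absorbs this, so just drop the non-collision remark. Second, your assumption that an internal node's ack is deferred until all children have acked is not clearly the paper's protocol; the text suggests $SG_q$ acks after its own ISE completes, with child tokens inserted synchronously at dispatch. Under that reading your $(\Leftarrow)$ induction is unnecessary---every fully-processed message has, by your own definition, sent its ack, so the invariant empties the buffer directly---and the $(\Rightarrow)$ direction still goes through via the finiteness-bounded descent you describe.
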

\begin{proof}
By definition, a token exists in $\mathcal{B}_q$ for each dispatched but unacknowledged message. Thus, $\mathcal{B}_q = \emptyset$ directly implies that the set of query messages equals the set of acknowledged messages, meaning all tasks have concluded. Conversely, if an unprocessed message existed, its token would persist in $\mathcal{B}_q$, contradicting the empty-buffer condition.
\end{proof}

\begin{theorem}
\label{thm:correctness}
Termination under the condition $\mathcal{B}_q=\emptyset$ guarantees that the result set $R(q)$ contains the globally optimal $k$NNs.
\end{theorem}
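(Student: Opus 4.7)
The plan is a proof by contradiction. Suppose at termination that $R(q)$ is not the true $k$NN set; then some object $o^*$ satisfies $SD(v_q, o^*) < \epsilon$ (the final upper bound) yet $o^* \notin R(q)$. The goal reduces to exhibiting $o^*$ in the query unit's priority queue with its true distance, since this would force $o^*$ into the top-$k$ and contradict the assumption.

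First I would fix a globally shortest path $P^*$ from $v_q$ to $o^*$ and decompose it by border crossings into a chain of subgraphs $SG_{p_0}, SG_{p_1}, \ldots, SG_{p_r}$, where $SG_{p_0}$ contains $v_q$, $SG_{p_r}$ contains the live vertex $v_l$ of $o^*$, and each consecutive pair is joined by an external edge from $b^{out}_i \in SG_{p_i}$ to $b^{in}_{i+1} \in SG_{p_{i+1}}$. Because every subpath of $P^*$ is itself shortest, Dijkstra within $SG_{p_i}$ from $b^{in}_i$ (invoked on line~6 of ISE) returns $D(b^{in}_i, b^{out}_i) = SD(v_q, b^{out}_i) - SD(v_q, b^{in}_i)$ exactly. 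I would then prove by induction on $i$ that the query distance cache of $SG_{p_i}$ eventually stores $D(v_q, b^{in}_i) = SD(v_q, b^{in}_i)$, and that the triggered ISE call emits an outgoing message to $SG_{p_{i+1}}$ carrying exactly $SD(v_q, b^{in}_{i+1})$. The base case holds since $SG_{p_0}$ is entered at $v_q$ with distance $0$; the inductive step is the arithmetic identity $dist + D(b^{in}_i, b^{out}_i) + w(b^{out}_i, b^{in}_{i+1}) = SD(v_q, b^{in}_{i+1})$. At $SG_{p_r}$ the same update reaches $v_l$, and line~14's condition evaluates to $SD(v_q, o^*) < \epsilon$, so $o^*$ is appended to $Q_l$ and forwarded to the query unit. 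By Theorem~\ref{thm:completeness}, $\mathcal{B}_q = \emptyset$ implies every $Q_l$ has been merged, so $o^*$ sits in the unit's priority queue with its true distance, giving the contradiction.

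The main obstacle is ruling out that any of the three pruning mechanisms suppresses the optimal message along the chain. The $\epsilon$-guard on line~2 is safe because $\epsilon$ decreases monotonically and $SD(v_q, b^{in}_i) \leq SD(v_q, o^*) < \epsilon$ holds throughout execution. The Euclidean pruning is safe because $E(v_q, SG_{p_i}) \leq SD(v_q, b^{in}_i) < \epsilon$ keeps each $SG_{p_i}$ inside the active set $SG'$. The cache guard is the subtle part: if a suboptimal message for $b^{in}_i$ arrives first, a later message carrying $SD(v_q, b^{in}_i)$ must still satisfy $dist \leq D(v_q, b^{in}_i)$, which holds because cached values are always upper bounds on the true shortest distance; this re-triggers the relaxation loop in lines~7--24 and keeps propagation alive until the chain reaches $SG_{p_r}$ before $\mathcal{B}_q$ can empty.
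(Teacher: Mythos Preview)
Your proposal is correct and follows the same contradiction strategy as the paper: assume a closer object $o^*$ with $SD(v_q,o^*)<\epsilon$ exists outside $R(q)$, trace its shortest path, and argue that the protocol must have explored that path fully and reported $o^*$. The paper's own argument is considerably terser---it simply asserts that such a path ``would be fully explored, as it always lies within the bound $\epsilon$''---so your inductive decomposition along the subgraph chain and your explicit verification that none of the three pruning guards (the $\epsilon$-test on line~2, the Euclidean filter on $SG'$, and the cache monotonicity check) can suppress an optimal message supply the rigor that the paper only sketches.
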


\begin{proof}
Theorem~\ref{thm:completeness} ensures search completeness upon termination. We prove optimality by contradiction. Let $\epsilon = \max_{o \in R(q)} SD(v_q, o)$ be the final upper bound. Assume a better candidate object $o'_{\text{cand}} \notin R(q)$ exists with $SD(v_q, o'_{\text{cand}}) < \epsilon$.
To discover $o'_{\text{cand}}$, a search path must exist where every vertex $v$ on the path satisfies $SD(v_q, v) < \epsilon$. According to D$k$NN's protocol, such a path would be fully explored, as it always lies within the bound $\epsilon$. This exploration would generate a sequence of tokens in $\mathcal{B}_q$. Since the algorithm has terminated ($\mathcal{B}_q = \emptyset$), all these tokens must have been resolved, implying this path was fully explored. A full exploration would have identified $o'_{\text{cand}}$. Given its superior distance, it would have been included in $R(q)$, which contradicts our initial assumption. Thus, no such better candidate can exist.
\end{proof}

\subsection{Time Complexity Analysis}

\subsubsection{Complexity of a Single Query}
The time complexity of D$k$NN for a single query is analyzed based on three key parameters: the original graph size $(|V|, |E|)$, the number of subgraphs $m$, and the number of subgraphs accessed $D$.
Each subgraph has average size $|V|/m$ vertices and $|E|/m$ edges. The local search within each subgraph uses Dijkstra's algorithm with complexity $O((|V|/m + |E|/m) \log (|V|/m))$.
The total cost for a single query is:
\begin{equation}
T_{\text{single}} = O\left(D \cdot (|V|/m + |E|/m) \log (|V|/m)\right)
\end{equation}

In the worst case where $D = m$, this simplifies to $O((|V| + |E|) \log |V|)$, matches the asymptotic cost of centralized approaches. The pruning strategy minimizes $D$ for practical efficiency.

\subsubsection{Complexity Under Parallel Execution}
For $R$ queries processed concurrently by $P$ distributed units:
\begin{equation}
T_{\text{total}} = O\left((R/P) \cdot D \cdot (|V|/m + |E|/m) \log (|V|/m)\right)
\end{equation}

D$k$NN achieves scalability through effective pruning (reducing $D$) and parallel processing (utilizing $P$ units).
\section{Experiments}\label{Section5}

\subsection{Implementation on Apache Storm}

\begin{figure}[t!]
    \vspace{-0.3cm}
    \centering
    \includegraphics[width=0.8\linewidth]{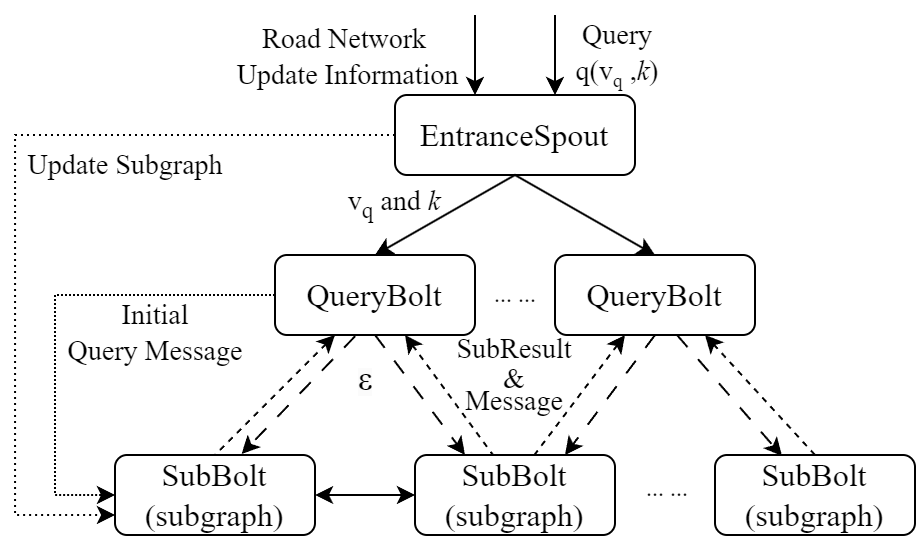}
    \caption{D$k$NN deployment on Apache Storm}
    \label{FIG4}
    \vspace{-0.6cm}
\end{figure}

To evaluate the performance of the D$k$NN algorithm, this study implements it based on Storm~\cite{ref40}, a distributed stream data processing framework. In line with Storm's paradigm, D$k$NN is designed as a directed acyclic graph (DAG) Topology where Spouts (data source nodes) and Bolts (logical computing nodes) are connected via streams (unbounded sequences of tuples). Spouts provide data for the Topology, while each Bolt processes tuples according to user-defined logic. 

The deployment of D$k$NN on Apache Storm (Fig.~\ref{FIG4}) consists of an EntranceSpout, SubBolts, and a QueryBolt. The EntranceSpout receives edge weight updates and $k$NN queries, maintains graph partitioning, and dispatches updates to SubBolts. Each SubBolt manages one partitioned subgraph. The QueryBolt receives local $k$NN results and confirms query termination. Query processing follows two steps:

1. The EntranceSpout encapsulates query $q(v_q, k)$ into a tuple for the QueryBolt, which then locates $v_q$'s subgraph and dispatches a query message to the corresponding SubBolt.

2. The target SubBolt traverses from $v_q$, generates messages for border vertices to adjacent subgraphs, and distributes them. Receiving SubBolts perform intra-subgraph searches, extend to adjacent subgraphs, and send found objects and acknowledgments to the QueryBolt. The QueryBolt maintains a priority queue for results, sets the $k$-th object‘s distance as $\epsilon$, prunes subgraphs, and synchronizes $\epsilon$. It uses an acknowledgment buffer to determine termination: when empty, it outputs final $k$NN results; otherwise, the query continues.

\subsection{Experimental Setup}
All algorithms discussed are implemented on Storm. Storm runs on a computer equipped with a 2.2GHz CPU and 16GB of memory.
For evaluation, four real-world road networks from the USA are used. The information
of each road network is shown in Table~\ref{table}. Among these, the medium-sized FLA dataset is set as the default for experiments. 

\begin{table}[!t]   
    \caption{Statistics on the Road Network Datasets}
    \vspace{-0.3cm}
    \begin{center}
        \begin{tabular}{c|c|c}
            \hline
            \textbf{Dataset} & \textbf{\#Vertices} & \textbf{\#Edges}\\
            \hline
                NY & 264,346 & 733,846 \\
                COL & 435,666 & 1,057,066 \\
                FLA & 1,070,376 & 2,712,798 \\
                CAL & 1,890,815 & 4,657,742 \\
            \hline
        \end{tabular}
    \label{table}
    \end{center}    
    \vspace{-0.7cm}
\end{table}

We implement Dijkstra~\cite{ref1}, TEN*-Index~\cite{ref7}, H2H~\cite{ref42}, and SIM$k$NN~\cite{ref14} as baseline methods. For fairness, multiple instances of each baseline algorithm are implemented in Storm to enable their parallel execution. 

\begin{figure*}[htbp]
    \vspace{-0.5cm}
    \centering
    \subfigure[]{
        \begin{minipage}{0.23\linewidth}
            \centering
            \includegraphics[scale=0.42]{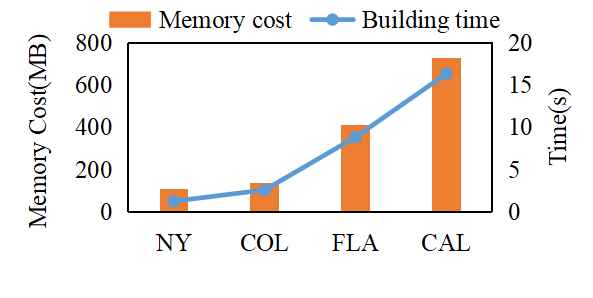}
        \end{minipage}}
    \subfigure[]{
        \begin{minipage}{0.23\linewidth}
            \centering
            \includegraphics[scale=0.42]{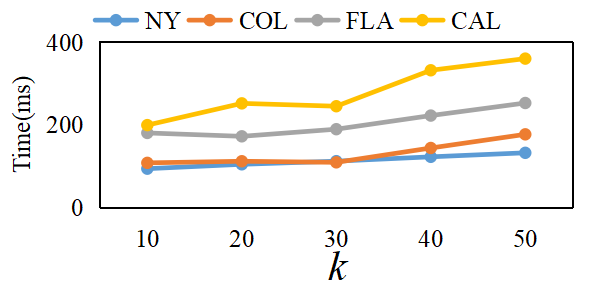}
        \end{minipage}}
    \subfigure[]{
        \begin{minipage}{0.23\linewidth}
            \centering
            \includegraphics[scale=0.42]{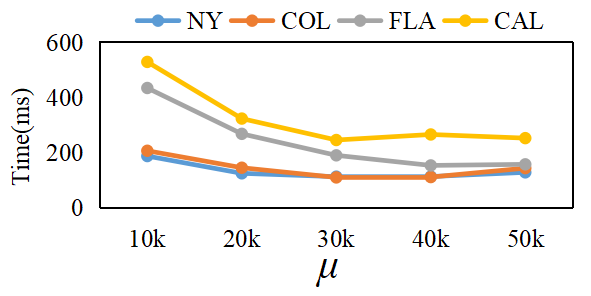}
        \end{minipage}}
    \subfigure[]{
        \begin{minipage}{0.23\linewidth}
            \centering
            \includegraphics[scale=0.42]{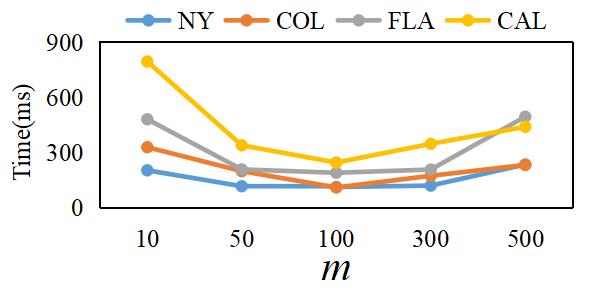}
        \end{minipage}}
    \subfigure[]{
        \begin{minipage}{0.23\linewidth}
            \centering
            \includegraphics[scale=0.42]{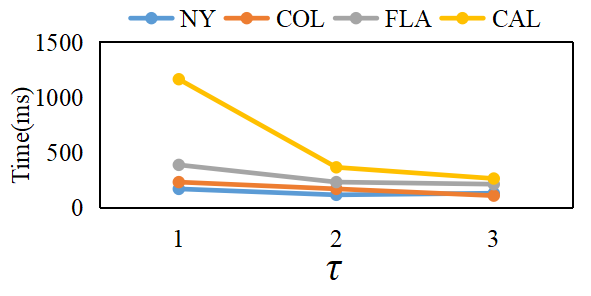}
        \end{minipage}}
    \subfigure[]{
        \begin{minipage}{0.23\linewidth}
            \centering
            \includegraphics[scale=0.42]{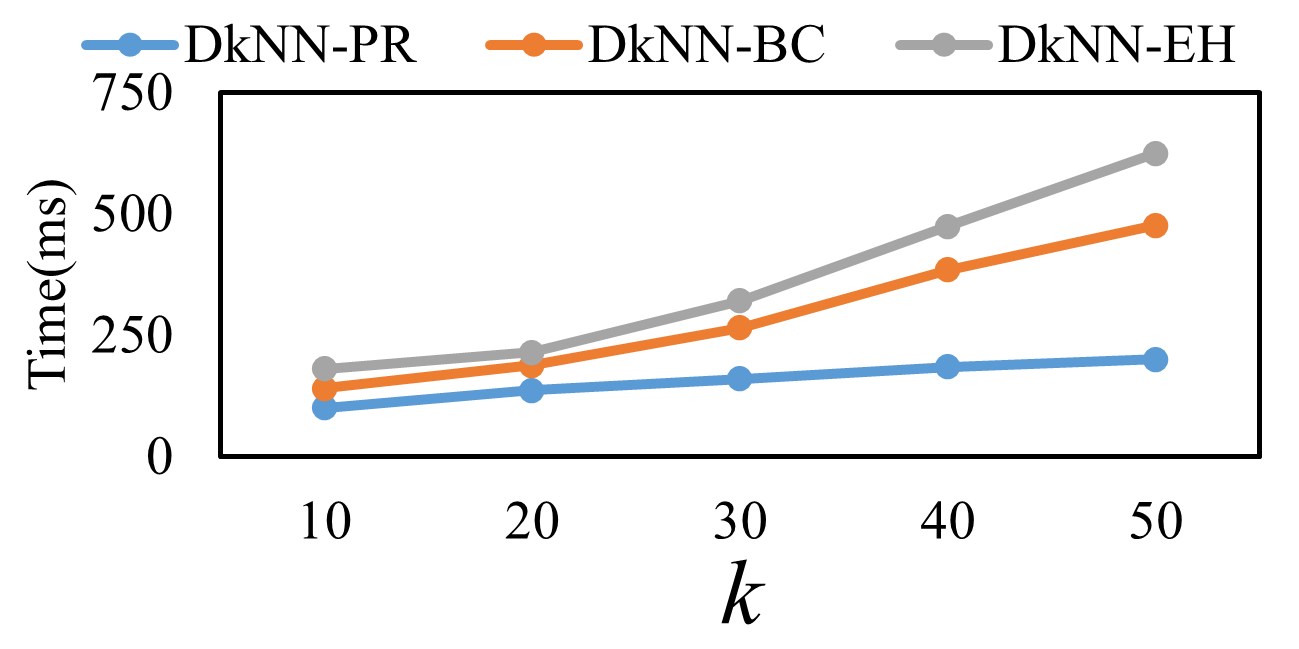}
        \end{minipage}}
    \subfigure[]{
        \begin{minipage}{0.23\linewidth}
            \centering
            \includegraphics[scale=0.42]{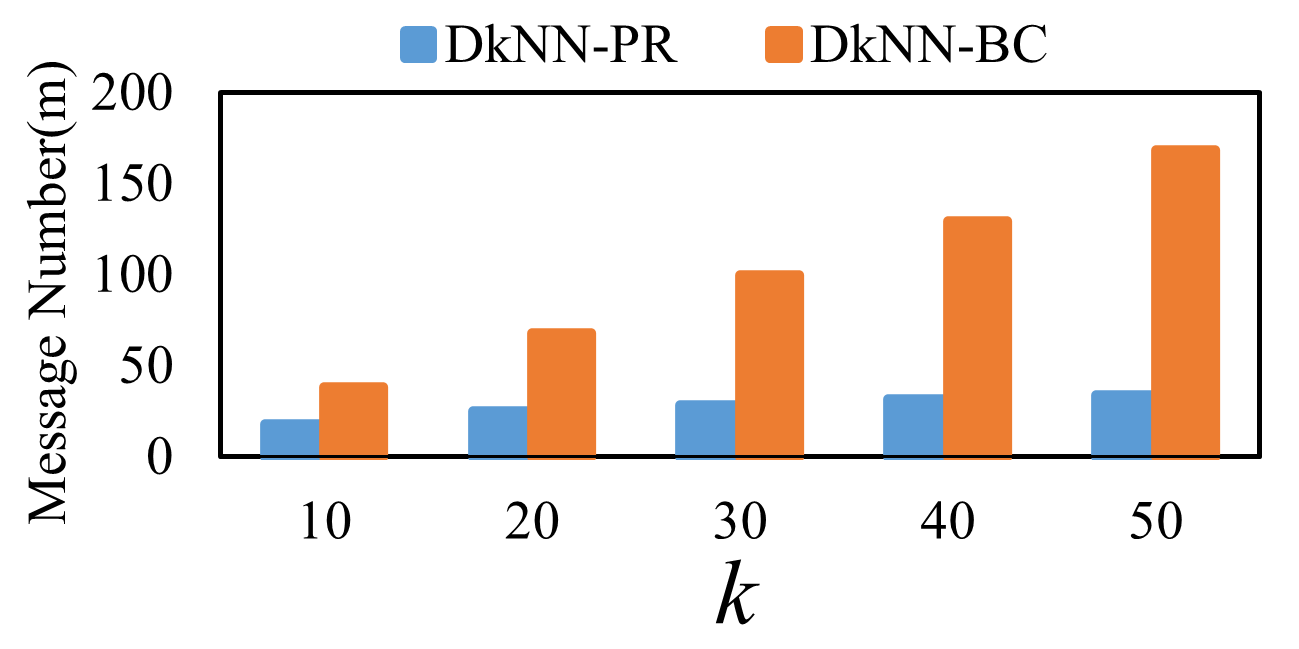}
        \end{minipage}}
    \subfigure[]{
        \begin{minipage}{0.23\linewidth}
            \centering
            \includegraphics[scale=0.42]{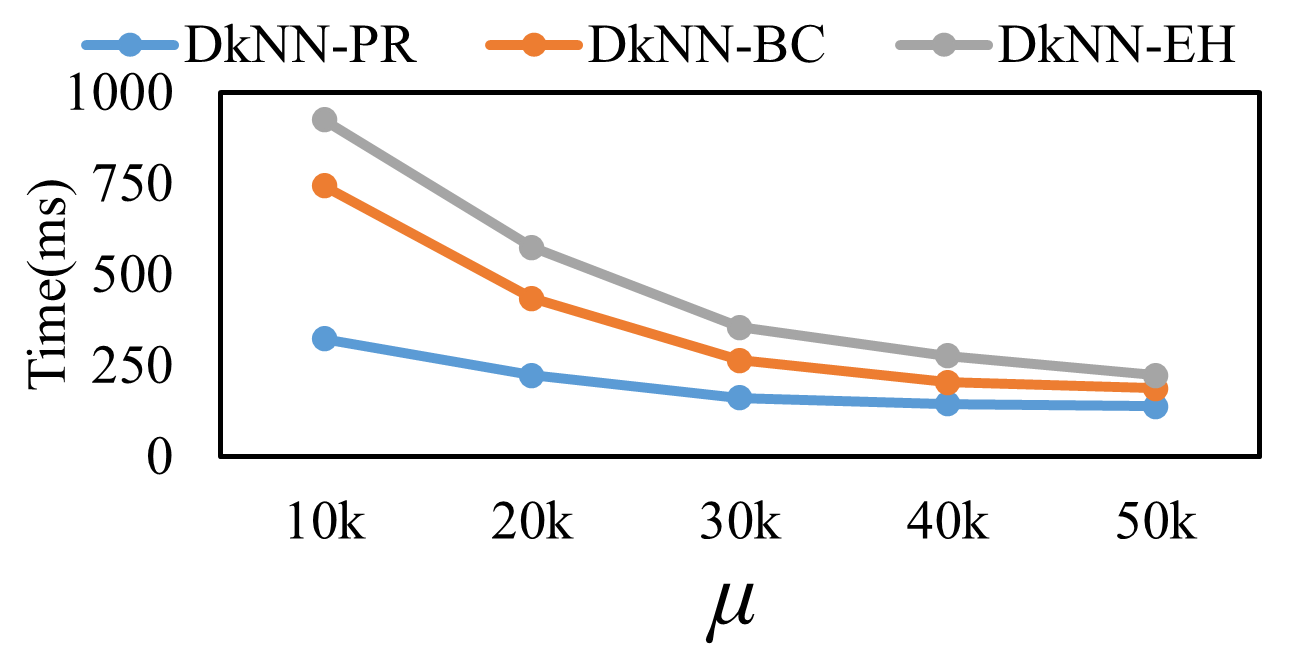}
        \end{minipage}}
    \subfigure[]{
        \begin{minipage}{0.23\linewidth}
            \centering
            \includegraphics[scale=0.42]{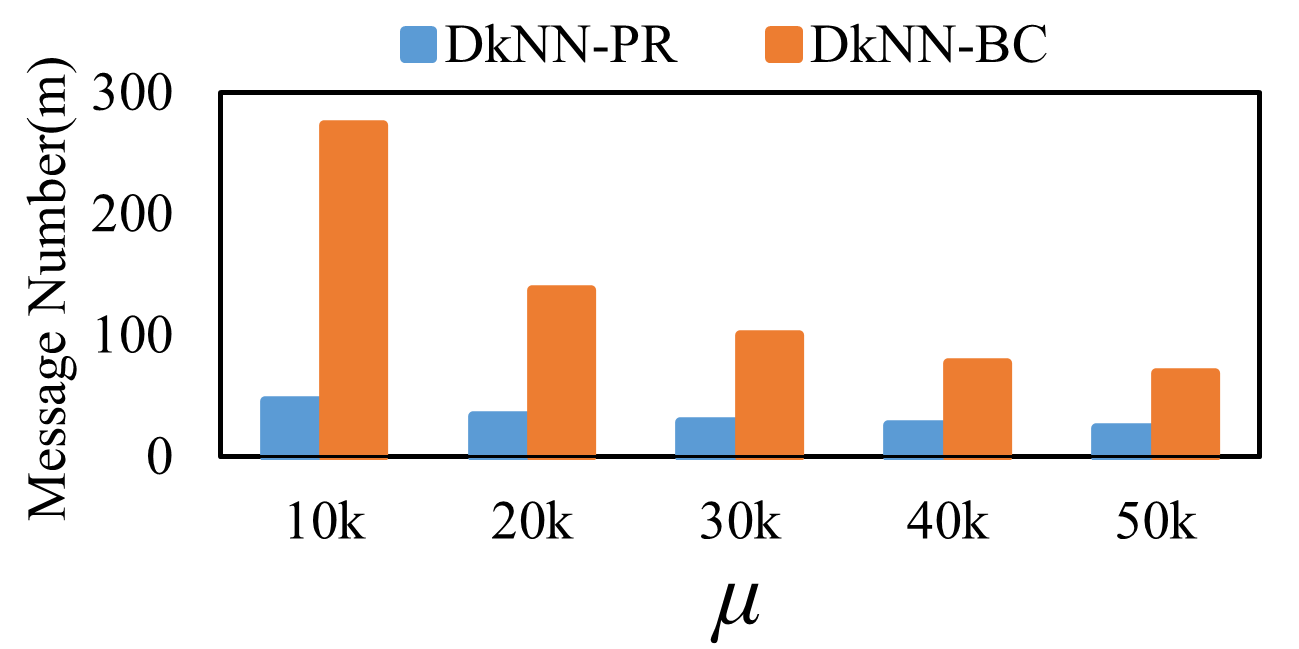}
        \end{minipage}}
    \subfigure[]{
        \begin{minipage}{0.23\linewidth}
            \centering
            \includegraphics[scale=0.42]{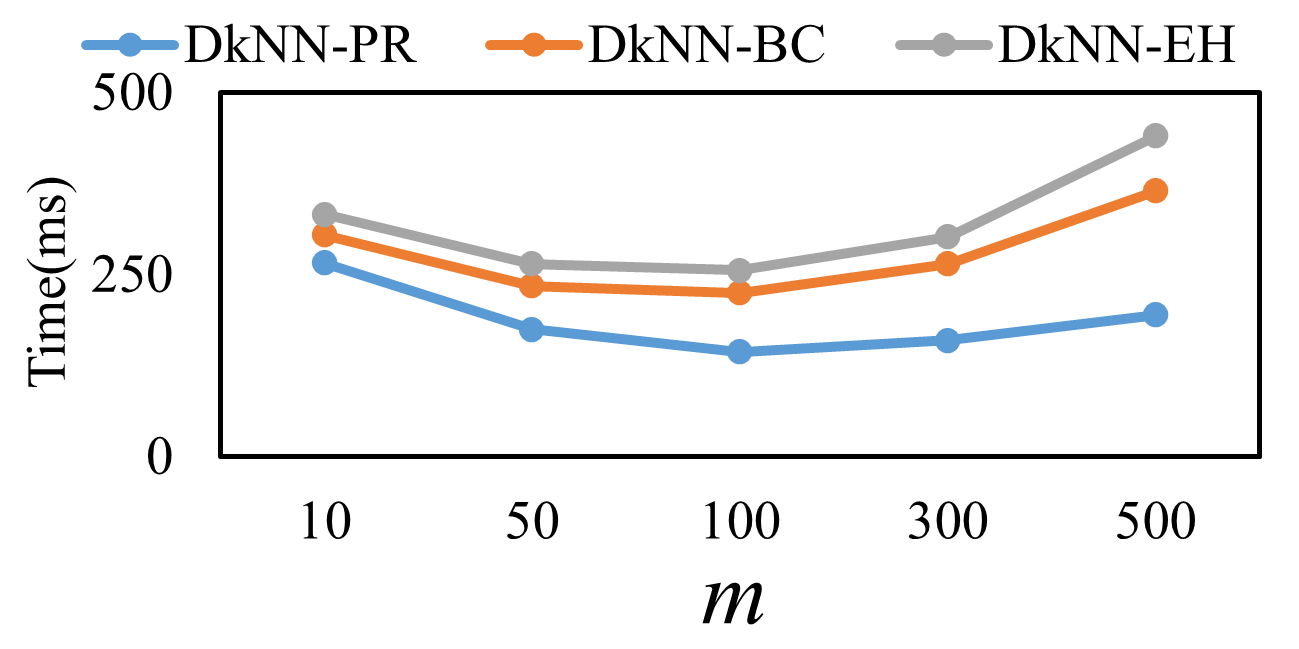}
        \end{minipage}}
    \subfigure[]{
        \begin{minipage}{0.23\linewidth}
            \centering
            \includegraphics[scale=0.42]{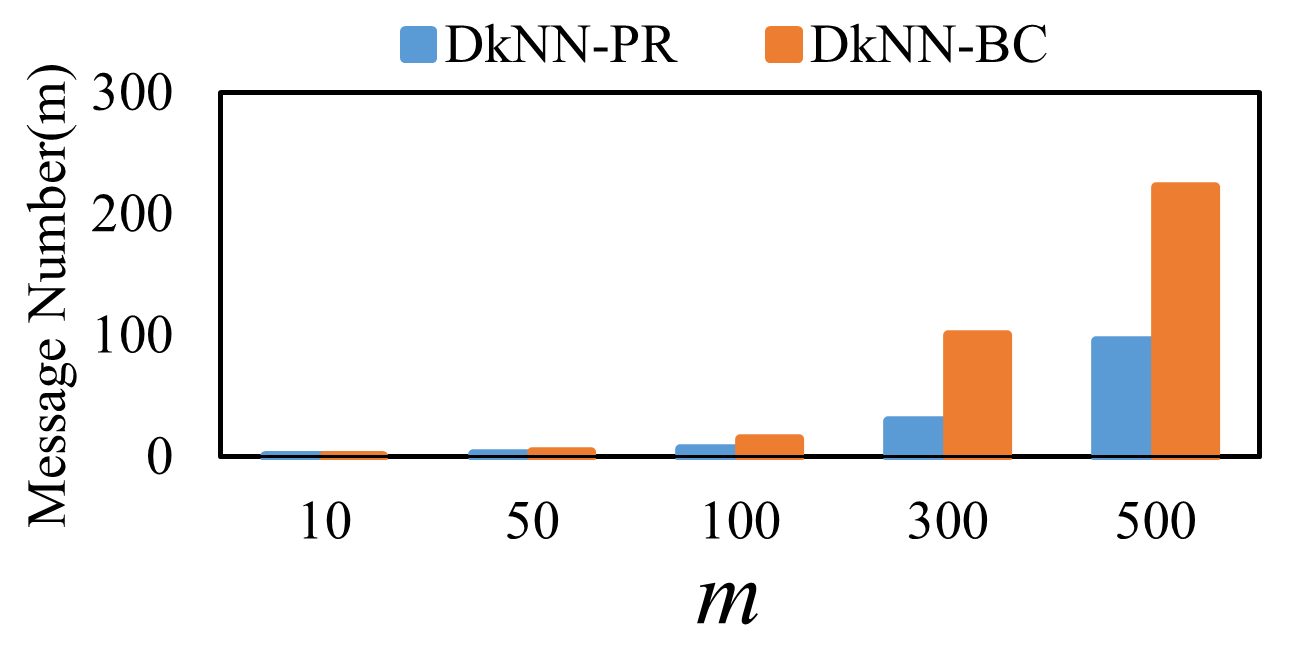}
        \end{minipage}}
    \subfigure[]{
        \begin{minipage}{0.23\linewidth}
            \centering
            \includegraphics[scale=0.42]{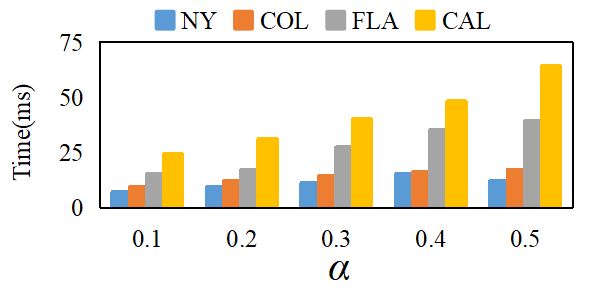}
        \end{minipage}}
    \vspace{-0.3cm}
    \caption{Evaluation of D$k$NN}    
    \label{FIG5}    
\end{figure*}

\subsection{Performance Evaluation of D$k$NN}

\textbf{Preprocessing Cost.} 
Fig.~\ref{FIG5}(a) presents the preprocessing cost of D$k$NN on four real-world road networks: NY, COL, FLA, and CAL. Preprocessing time and resource consumption increase with graph size, due to the growing number of vertices and edges that must be partitioned.

\textbf{Query Efficiency.}
We evaluate the query performance of D$k$NN under various parameter settings, as illustrated in Fig.~\ref{FIG5}(b)-(e). We analyze the impact of four key parameters on query efficiency: the number of nearest neighbors $k$, the number of moving objects $\mu$, the number of subgraphs $m$, and the number of threads $\tau$. Unless otherwise specified, the default values for these parameters are set to $k=30$, $\mu=30000$, $m=300$, and $\tau=20$, respectively.

As shown in Fig.~\ref{FIG5}(b), query time increases with $k$, which is attributed to the expansion of the search range. 
Fig.~\ref{FIG5}(c) reveals that query time initially decreases and then stabilizes as $\mu$ increases. A higher $\mu$ value increases the density of moving objects within subgraphs, reducing the need to access multiple subgraphs. When $\mu$ is sufficiently large, the $k$NNs can often be found within the local subgraph, leading to stable query times. 
The impact of the number of subgraphs $m$ is shown in Fig.~\ref{FIG5}(d). Query time first decreases and then increases with $m$. Small $m$ values limit parallelism and increase traversal time per subgraph. As $m$ grows, parallelism improves and query time drops. However, excessively large $m$ leads to higher inter-subgraph communication overhead, degrading performance. 
Fig.~\ref{FIG5}(e) illustrates the effect of thread $\tau$ on query time. Increasing $\tau$ enhances parallelism by distributing subgraph processing across more threads, thereby improving query efficiency and demonstrating the scalability of D$k$NN.

\textbf{Ablation Study.} 
An ablation study on the FLA dataset evaluates the efficacy of the proposed query range pruning strategy (Fig.~\ref{FIG5}(f)–(k)). We compare the query time and the number of messages generated by three variants: D$k$NN-EH (no pruning), D$k$NN-PR (with pruning in Section~\ref{query-range-pruning}), and D$k$NN-BC (broadcasting bounds to all subgraphs). 
Results show that D$k$NN-PR exhibits the lowest query time and generated messages across all parameter settings ($k$, $\mu$, and $m$). The pruning strategy effectively limits search expansion and bound propagation to relevant subgraphs, whereas D$k$NN-BC generates a significantly higher volume of messages and D$k$NN-EH has a higher computational cost. This indicates that the proposed pruning method achieves a balance between computational efficiency and communication overhead. 

\textbf{Update Efficiency.} 
We evaluate D$k$NN's ability to handle dynamic edge weight changes in Fig.~\ref{FIG5}(l), a key metric for performance in dynamic environments. The parameter $\alpha$ controls the proportion of updated edges. While the update time increases with a larger $\alpha$, D$k$NN's update cost remains on the millisecond scale. This efficiency stems from its index-free design. Instead of performing complex structural maintenance like index-based methods, D$k$NN's adaptation to a new graph snapshot is lightweight. 

\begin{figure*}[htbp]
    \centering  
    \subfigure[]{
        \begin{minipage}{0.23\linewidth}
            \centering
            \includegraphics[scale=0.42]{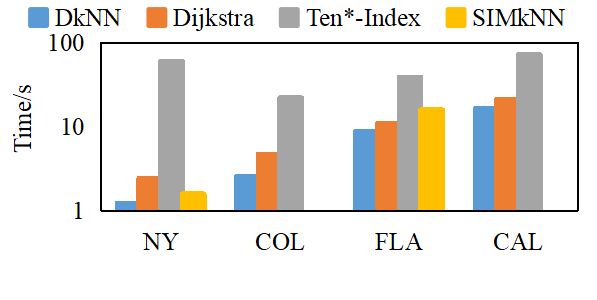}
        \end{minipage}}
    \subfigure[]{
        \begin{minipage}{0.23\linewidth}
            \centering
            \includegraphics[scale=0.42]{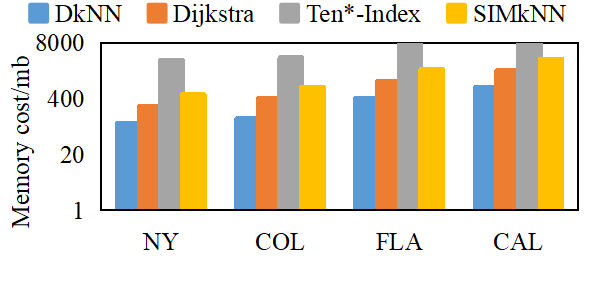}
        \end{minipage}}
    \subfigure[]{
        \begin{minipage}{0.23\linewidth}
            \centering
            \includegraphics[scale=0.42]{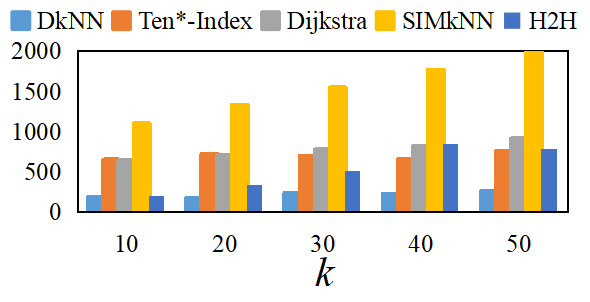}
        \end{minipage}}
    \subfigure[]{
        \begin{minipage}{0.23\linewidth}
            \centering
            \includegraphics[scale=0.42]{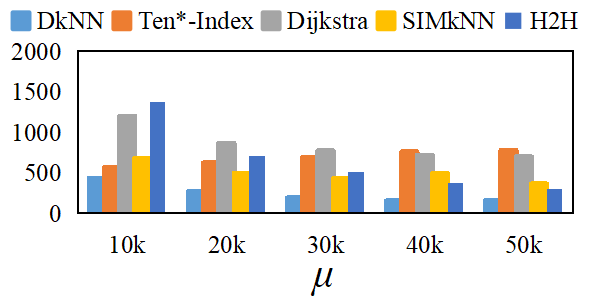}
        \end{minipage}}
    \subfigure[]{
        \begin{minipage}{0.23\linewidth}
            \centering
            \includegraphics[scale=0.42]{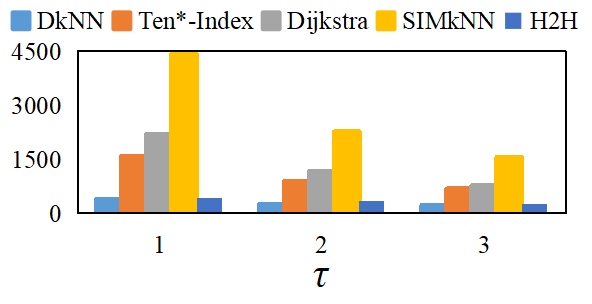}
        \end{minipage}}
    \subfigure[]{
        \begin{minipage}{0.23\linewidth}
            \centering
            \includegraphics[scale=0.42]{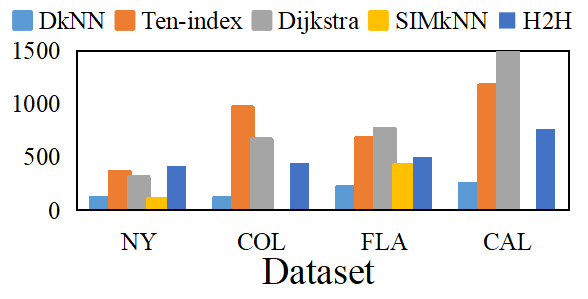}
        \end{minipage}}
    \subfigure[]{
        \begin{minipage}{0.23\linewidth}
            \centering
            \includegraphics[scale=0.42]{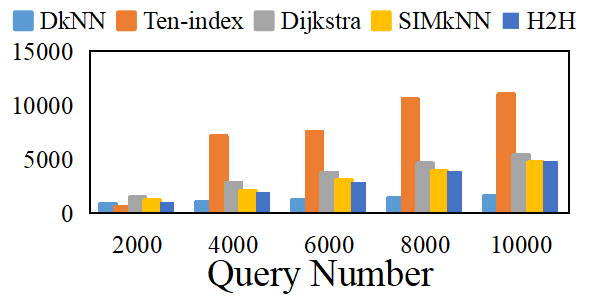}
        \end{minipage}}
    \subfigure[]{
        \begin{minipage}{0.23\linewidth}
            \centering
            \includegraphics[scale=0.42]{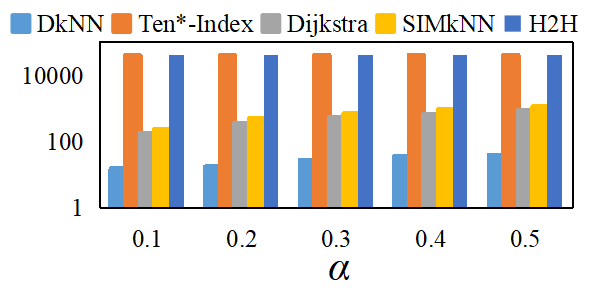}
        \end{minipage}}
    \vspace{-0.3cm}
    \caption{Comparison with Baselines}
    \label{FIG6}    
    \vspace{-0.5cm}
\end{figure*}

\subsection{Comparison with Baselines}

We conduct a comprehensive evaluation comparing D$k$NN against four baseline methods—Dijkstra, TEN*-Index, SIM$k$NN, and H2H—in terms of preprocessing overhead, query efficiency under various parameter settings, scalability across graphs of different sizes, and update performance under varying update intensities $\alpha$. 

\textbf{Preprocessing Cost Comparison.} 
Fig.~\ref{FIG6}(a)–(b) summarizes the preprocessing costs of all algorithms. As shown in Fig.~\ref{FIG6}(a), TEN*-Index and SIM$k$NN incur substantial preprocessing time due to tree decomposition and grid-based indexing, respectively. In contrast, D$k$NN only performs graph partitioning without building a full index, leading to lower preprocessing overhead. Dijkstra, which avoids both indexing and partitioning, exhibits the shortest preprocessing time. Fig.~\ref{FIG6}(b) further shows that the space consumption trends align with time costs: index-based baselines require significantly more storage than D$k$NN’s lightweight partitioning mechanism.

\textbf{Query Efficiency Comparison.} 
Fig.~\ref{FIG6}(c)–(g) present the query performance under different parameters. In Fig.~\ref{FIG6}(c), Dijkstra exhibits the longest query time, which grows considerably as $k$ increases due to its exhaustive vertex-wise expansion. SIM$k$NN improves upon Dijkstra by pre-identifying candidate grids containing potential $k$NNs, thus reducing unnecessary vertex visits. TEN*-Index achieves faster query response by leveraging precomputed $k$NN labels stored in a tree structure, avoiding graph traversal during query execution. 
As shown in the figures, D$k$NN exhibits lower query times than all baseline methods across the tested parameter ranges. By decomposing the query into parallel subgraph-level searches, D$k$NN effectively distributes the computational load which is not available in the centralized baseline algorithms. Moreover, as $k$ increases, the query time of D$k$NN grows at a slower rate, since a moderate rise in $k$ does not necessarily increase the number of subgraphs involved in the search. 
Fig.~\ref{FIG6}(d)–(g) further compare query efficiency under varying $\mu$, $\tau$, graph sizes, and number of queries. Across all settings, D$k$NN maintains a performance advantage, which indicates its robustness and scalability under diverse conditions.

\textbf{Update Efficiency Comparison.}
We compare the update efficiency of D$k$NN and baseline algorithms under varying edge update ratios $\alpha$, as shown in Fig.~\ref{FIG6}(h). This test directly measures each method's performance on the core challenge of dynamic road networks. While update times increase with $\alpha$ for all methods, D$k$NN achieves the shortest update time across all scenarios. It consistently outperforms the baselines by orders of magnitude. This significant performance gap validates D$k$NN's design for dynamic environments. By combining graph decomposition for parallel updates with a lightweight, cache-based adaptation, D$k$NN effectively minimizes the cost of handling network changes, demonstrating its efficiency for real-world dynamic applications.

\section{Conclusion}\label{Section6}
This work studies the $k$NN problem over moving objects on dynamic road networks where travel times fluctuate. Recognizing the maintenance challenges of index-based methods in dynamic road networks, we proposed D$k$NN, an algorithm that utilizes a distributed, parallel network expansion approach. D$k$NN partitions the network to enable parallel processing and then employs key mechanisms to overcome the challenges inherent to this distributed approach. It maintains global distance accuracy through message passing between border vertices during parallel subgraph explorations. Furthermore, it minimizes unnecessary work using a hybrid pruning strategy that combines Euclidean distance with the current $k$-th object's distance to prune irrelevant subgraphs. This design also facilitates provable query termination, which is managed by a token-matching mechanism. 
Implemented on the Storm platform, our experiments show that D$k$NN outperforms traditional methods in query efficiency and adaptability to dynamic changes. The results validate our distributed, index-free approach as an effective solution for $k$NN queries in real-world dynamic road network scenarios.

\section{Acknowledgement}
This work was supported by the Key Program of the National Natural Science Foundation of China  (62532002), and by the
National Natural Science Foundation of China (62172351).

\bibliographystyle{IEEEtran}
\bibliography{reference}

@incollection{ref1,
  title={A note on two problems in connexion with graphs},
  author={Dijkstra, Edsger W.},
  journal={Numerische Mathematik},
  pages={269–271},
  year={1959}
}

@inproceedings{ref2,
  title={Query processing in spatial network databases},
  author={Papadias, Dimitris and Zhang, Jun and Mamoulis, Nikos and Tao, Yufei},
  booktitle={Proceedings 2003 VLDB Conference},
  pages={802--813},
  year={2003}
}

@inproceedings{ref3,
  title={Location-based query processing on moving objects in road networks},
  author={Wang, Haojun and Zimmermann, Rogger},
  booktitle={Proc. VLDB},
  pages={321--332},
  year={2007}
}

@article{ref4,
  title={Continuous nearest neighbor monitoring in road networks},
  author={Mouratidis, Kyriakos and Yiu, Man Lung and Papadias, Dimitris and Mamoulis, Nikos},
  publisher={VLDB Endowment},
  year={2006}
}

@inproceedings{ref5,
  title={The islands approach to nearest neighbor querying in spatial networks},
  author={Huang, Xuegang and Jensen, Christian S and {\v{S}}altenis, Simonas},
  booktitle={International Symposium on Spatial and Temporal Databases},
  pages={73--90},
  year={2005}
}

@inproceedings{ref7,
  title={Progressive top-k nearest neighbors search in large road networks},
  author={Ouyang, Dian and Wen, Dong and Qin, Lu and Chang, Lijun and Zhang, Ying and Lin, Xuemin},
  booktitle={Proceedings of the 2020 ACM SIGMOD International Conference on Management of Data},
  pages={1781--1795},
  year={2020}
}

@article{ref8,
  title={Toain: a throughput optimizing adaptive index for answering dynamic {k nn} queries on road networks},
  author={Luo, Siqiang and Kao, Ben and Li, Guoliang and Hu, Jiafeng and Cheng, Reynold and Zheng, Yudian},
  journal={Proceedings of the VLDB Endowment},
  volume={11},
  number={5},
  pages={594--606},
  year={2018}
}

@inproceedings{ref9,
  title={V-tree: Efficient knn search on moving objects with road-network constraints},
  author={Shen, Bilong and Zhao, Ying and Li, Guoliang and Zheng, Weimin and Qin, Yue and Yuan, Bo and Rao, Yongming},
  booktitle={2017 IEEE 33rd International Conference on Data Engineering (ICDE)},
  pages={609--620},
  year={2017}
}

@article{ref10,
  title={{ROAD}: A new spatial object search framework for road networks},
  author={Lee, Ken CK and Lee, Wang-Chien and Zheng, Baihua and Tian, Yuan},
  journal={IEEE transactions on knowledge and data engineering},
  volume={24},
  number={3},
  pages={547--560},
  year={2010}
}

@inproceedings{ref11,
  title={G-tree: An efficient index for knn search on road networks},
  author={Zhong, Ruicheng and Li, Guoliang and Tan, Kian-Lee and Zhou, Lizhu},
  booktitle={Proceedings of the 22nd ACM international conference on Information \& Knowledge Management},
  pages={39--48},
  year={2013}
}

@inproceedings{ref12,
  title={G*-tree: An efficient spatial index on road networks},
  author={Li, Zijian and Chen, Lei and Wang, Yue},
  booktitle={2019 IEEE 35th International Conference on Data Engineering (ICDE)},
  pages={268--279},
  year={2019}
}

@inproceedings{ref13,
  title={Efficient query processing on spatial networks},
  author={Sankaranarayanan, Jagan and Alborzi, Houman and Samet, Hanan},
  booktitle={Proceedings of the 13th annual ACM international workshop on Geographic information systems},
  pages={200--209},
  year={2005}
}

@article{ref14,
  title={\textsf{SIMkNN}: A {Scalable Method} for {in-Memory} kNN {Search} over {Moving Objects} in {Road Networks}},
  author={Cao, Bin and Hou, Chenyu and Li, Suifei and Fan, Jing and Yin, Jianwei and Zheng, Baihua and Bao, Jie},
  journal={IEEE Transactions on Knowledge and Data Engineering},
  volume={30},
  number={10},
  pages={1957--1970},
  year={2018}
}

@inproceedings{ref15,
  title={An efficient framework for correctness-aware kNN queries on road networks},
  author={He, Dan and Wang, Sibo and Zhou, Xiaofang and Cheng, Reynold},
  booktitle={2019 IEEE 35th International Conference on Data Engineering (ICDE)},
  pages={1298--1309},
  year={2019}
}

@inproceedings{ref16,
  title={Efficient continuous nearest neighbor query in spatial networks using euclidean restriction},
  author={Demiryurek, Ugur and Banaei-Kashani, Farnoush and Shahabi, Cyrus},
  booktitle={International Symposium on Spatial and Temporal Databases},
  pages={25--43},
  year={2009}
}

@inproceedings{ref17,
  title={A gpu accelerated update efficient index for knn queries in road networks},
  author={Li, Chuanwen and Gu, Yu and Qi, Jianzhong and He, Jiayuan and Deng, Qingxu and Yu, Ge},
  booktitle={2018 IEEE 34th International Conference on Data Engineering (ICDE)},
  pages={881--892},
  year={2018}
}

@article{ref19,
  title={A fast and high quality multilevel scheme for partitioning irregular graphs},
  author={Karypis, George and Kumar, Vipin},
  journal={SIAM Journal on scientific Computing},
  volume={20},
  number={1},
  pages={359--392},
  year={1998}
}

@article{ref21,
  title={A {MapReduce-based} approach for shortest path problem in large-scale networks},
  author={Aridhi, Sabeur and Lacomme, Philippe and Ren, Libo and Vincent, Benjamin},
  journal={Engineering Applications of Artificial Intelligence},
  volume={41},
  pages={151--165},
  year={2015}
}

@inproceedings{ref22,
  title={Randomized distributed shortest paths algorithms},
  author={Awerbuch, Baruch},
  booktitle={Proceedings of the twenty-first annual ACM symposium on Theory of computing},
  pages={490--500},
  year={1989}
}

@article{ref23,
  title={Distributed computation on graphs: Shortest path algorithms},
  author={Chandy, K Mani and Misra, Jayadev},
  journal={Communications of the ACM},
  volume={25},
  number={11},
  pages={833--837},
  year={1982}
}

@article{ref24,
  title={Distributed exact shortest paths in sublinear time},
  author={Elkin, Michael},
  journal={Journal of the ACM (JACM)},
  volume={67},
  number={3},
  pages={1--36},
  year={2020}
}

@inproceedings{ref25,
  title={Improved distributed algorithms for exact shortest paths},
  author={Ghaffari, Mohsen and Li, Jason},
  booktitle={Proceedings of the 50th Annual ACM SIGACT Symposium on Theory of Computing},
  pages={431--444},
  year={2018}
}

@article{ref26,
  title={Distributed consensus extended Kalman filter: a variance-constrained approach},
  author={Li, Wenling and Jia, Yingmin and Du, Junping},
  journal={IET Control Theory \& Applications},
  volume={11},
  number={3},
  pages={382--389},
  year={2017}
}

@inproceedings{ref27,
  title={Parapll: {Fast} parallel shortest-path distance query on large-scale weighted graphs},
  author={Qiu, Kun and Zhu, Yuanyang and Yuan, Jing and Zhao, Jin and Wang, Xin and Wolf, Tilman},
  booktitle={Proceedings of the 47th International Conference on Parallel Processing},
  pages={1--10},
  year={2018}
}

@article{ref28,
  title={{CANDS}: {Continuous} optimal navigation via distributed stream processing},
  author={Yang, Dingyu and Zhang, Dongxiang and Tan, Kian-Lee and Cao, Jian and Le Mou{\"e}l, Fr{\'e}d{\'e}ric},
  journal={Proceedings of the VLDB Endowment (PVLDB)},
  volume={8},
  number={2},
  pages={137--148},
  year={2014}
}

@inproceedings{ref29,
  title={Fast exact shortest-path distance queries on large networks by pruned landmark labeling},
  author={Akiba, Takuya and Iwata, Yoichi and Yoshida, Yuichi},
  booktitle={Proceedings of the 2013 ACM SIGMOD International Conference on Management of Data},
  pages={349--360},
  year={2013}
}

@inproceedings{ref30,
  title={{$PowerGraph$}: Distributed {$Graph-Parallel$} computation on natural graphs},
  author={Gonzalez, Joseph E and Low, Yucheng and Gu, Haijie and Bickson, Danny and Guestrin, Carlos},
  booktitle={10th USENIX symposium on operating systems design and implementation (OSDI 12)},
  pages={17--30},
  year={2012}
}

@inproceedings{ref31,
  title={Pregel: a system for large-scale graph processing},
  author={Malewicz, Grzegorz and Austern, Matthew H and Bik, Aart JC and Dehnert, James C and Horn, Ilan and Leiser, Naty and Czajkowski, Grzegorz},
  booktitle={Proceedings of the 2010 ACM SIGMOD International Conference on Management of data},
  pages={135--146},
  year={2010}
}

@inproceedings{ref32,
  title={Gps: A graph processing system},
  author={Salihoglu, Semih and Widom, Jennifer},
  booktitle={Proceedings of the 25th international conference on scientific and statistical database management},
  pages={1--12},
  year={2013}
}

@article{ref33,
  title={Simpler is more: {Efficient} top-k nearest neighbors search on large road networks},
  author={Wang, Yiqi and Yuan, Long and Zhang, Wenjie and Lin, Xuemin and Chen, Zi and Liu, Qing},
  journal={arXiv:2408.05432},
  year={2024}
}

@online{ref35,
   url="https://gaia.didichuxing.com/"
}

@online{ref36,
   url="https://www.uber.com/"
}

@inproceedings{ref37,
  title={T-drive: driving directions based on taxi trajectories},
  author={Yuan, Jing and Zheng, Yu and Zhang, Chengyang and Xie, Wenlei and Xie, Xing and Sun, Guangzhong and Huang, Yan},
  booktitle={Proceedings of the 18th SIGSPATIAL International conference on advances in geographic information systems},
  pages={99--108},
  year={2010}
}

@inproceedings{ref38,
  title={Driving with knowledge from the physical world},
  author={Yuan, Jing and Zheng, Yu and Xie, Xing and Sun, Guangzhong},
  booktitle={Proceedings of the 17th ACM SIGKDD international conference on Knowledge discovery and data mining},
  pages={316--324},
  year={2011}
}

@article{ref39,
  title={{ODIN}: {Object Density Aware Index} for {C$k$NN Queries Over Moving Objects} on {Road Networks}},
  author={Yu, Ziqiang and Yu, Xiaohui and Zhou, Tao and Chen, Yueting and Liu, Yang and Li, Bohan},
  journal={IEEE Transactions on Knowledge and Data Engineering},
  volume={36},
  number={11},
  pages={6758--6772},
  year={2023}
}

@online{ref40,
    url="{https://storm.apache.org/}"
}

@article{ref41,
  title={{Distributed Shortest Distance Labeling} on {Large-Scale Graphs}},
  author={Zeng, Yuanyuan and Ma, Chenhao and Fang, Yixiang},
  journal={Proceedings of the VLDB Endowment},
  volume={17},
  number={10},
  pages={2641--2653},
  year={2024}
}

@inproceedings{ref42,
  title={When hierarchy meets 2-hop-labeling: {Efficient} shortest distance queries on road networks},
  author={Ouyang, Dian and Qin, Lu and Chang, Lijun and Lin, Xuemin and Zhang, Ying and Zhu, Qing},
  booktitle={Proceedings of the 2018 International Conference on Management of Data},
  pages={709--724},
  year={2018}
}

@inproceedings{ref43,
  title={Online mobile micro-task allocation in spatial crowdsourcing},
  author={Tong, Yongxin and She, Jieying and Ding, Bolin and Wang, Libin and Chen, Lei},
  booktitle={2016 IEEE 32Nd international conference on data engineering (ICDE)},
  pages={49--60},
  year={2016},
  organization={IEEE}
}

\end{document}